\newtheorem{theorem}{Theorem}
\newtheorem{definition}{Definition}
\newtheorem{lemma}{Lemma}
\newtheorem{corollary}{Corollary}
\newtheorem{proposition}{Proposition}
\newtheorem{example}{Example}
\newtheorem{property}{Property}
\newcommand{\set}[1]{\mathcal{#1}}
\newcommand{\Real}{{\mathbb{R}}}
\def\setI{{\cal I}}
\def\GammaIn{\Gamma^{\text{In}}}
\def\setalpha{\alpha}
\def\N{{\cal N}}
\def\setIngleton{{\Delta}}
\title{The minimal set of Ingleton inequalities} 
\author{
\authorblockN{Laurent Guill\'{e}} 
\authorblockA{ENST Paris}
\and
\authorblockN{Terence H. Chan and Alex Grant}
\authorblockA{Institute for Telecommunications Research\\
University of South Australia}
}
\begin{document}
\maketitle

\begin{abstract}
  The Ingleton-LP bound is an outer bound for the multicast capacity
  region, assuming the use of linear network codes. Computation of the
  bound is performed on a polyhedral cone obtained by taking the
  intersection of half-spaces induced by the basic (Shannon-type)
  inequalities and Ingleton inequalities.  This paper simplifies the
  characterization of this cone, by obtaining the unique minimal set
  of Ingleton inequalities.  As a result, the effort required for
  computation of the Ingleton-LP bound can be greatly reduced.
\end{abstract}

\section{Introduction}
%

The network coding approach introduced
in~\cite{Ahlswede.Cai.ea00network,Li.Yeung.ea03linear} generalizes
routing by allowing intermediate nodes to forward packets that are
coded combinations of all received data packets.  
One fundamental problem in network coding is to understand the
capacity region and the classes of codes that achieve capacity. 

In the single session multicast scenario, the problem is well
understood -- the capacity region is characterized by max-flow/min-cut
bounds and linear network codes are sufficient to achieve maximal
throughput~\cite{Li.Yeung.ea03linear,Dougherty.Freiling.ea05insufficiency}.
Significant practical and theoretical complications arise in more
general multicast scenarios, involving more than one session.  

There are only a few tools in the literature for study of the capacity
region. One powerful theoretical tool bounds the capacity region by
the intersection of a set of hyperplanes (specified by the network
topology and connection requirement) and the set of entropy functions
$\Gamma^*$ (inner bound), or its closure $\bar{\Gamma}^*$ (outer
bound)~\cite{Song.Yeung.ea03zero-error,Yeung02first,Yeung.Li.ea06network}.
The set of entropy functions is formally introduced in Section
\ref{sec:two}.

In fact an exact expression for the capacity region has been obtained,
again in terms of $\Gamma^*$~\cite{YanYeu07}.
Unfortunately, the capacity region, or even the bounds cannot be
computed in practice, due to the lack of an explicit characterization
of the set of entropy functions for more than three random
variables. One way to resolve this difficulty is via relaxation of the
bound, replacing the set of entropy functions with the set of
polymatroids $\Gamma$. This yields the linear programming (LP)
bound~\cite{Yeung.Li.ea06network}.

The LP bound can however be quite loose. While the set of polymatroids
${\Gamma}^*$ is polyhedral, the set of entropy functions $\Gamma$ is
not~\cite{Matus07infinitely}. The addition of any finite number of
linear inequalities to the LP bound cannot tighten it to the capacity
region. Furthermore, the LP bound holds for any choice of network
codes (linear or non-linear). Hence, the LP bound can be even looser
when restricted to linear network codes.  To address the issue, a
modified LP bound was proposed in~\cite{Chan.Grant07dualities}.

The idea of the modified bound is quite simple. Given any network
code, the source messages and transmitted link messages are random
variables. Restriction to linear codes requires that the corresponding
entropy function satisfies the Ingleton inequality.  As a result, we
can tighten the LP bound for linear network codes by replacing
$\Gamma$ (the set of polymatroids) with $\GammaIn$ (a subset of
$\Gamma$ that satisfies all Ingleton inequalities).
 
Efficiently computation of this Ingleton-LP bound requires a compact
and explicit characterization of $\GammaIn$. As the set $\GammaIn$ is
the intersection of many half-spaces, one can greatly simplify the
characterization by identifying which inequalities (or half-spaces)
are redundant, meaning that they are implied by other
inequalities. The main objective of this paper is to understand the
relationship between these half-spaces, so as to simplify the
characterization of $\GammaIn$. Our main result, Theorem
\ref{thm:minimality} in Section~\ref{sec:three} is the identification
of the unique minimal set of Ingleton inequalities.  For reasons of
space, all simple proofs have been omitted, and longer proofs are
given in sketch form. 


\section{Entropy Functions}\label{sec:two}
Let $\N \triangleq \{ 1,2,\cdots, n\}$ induce a $2^{n-1}$-dimensional
real Euclidean space $\set{F}_n$ whose coordinates are indexed by the
set of all nonempty subsets $\alpha\subseteq\N$.  Each $h\in\set{F}_n$
is defined by $(h(\alpha) : \alpha \subseteq \N)$.  Although
$h(\alpha)$ is not defined for the empty set $\emptyset$, we will
assume $h(\emptyset) = 0$. Points $h\in\set{F}_n$ can also be
considered as functions $h:2^{\N}\mapsto\Real$.
\begin{definition}[Entropic function]\label{def:entropic}
  A function $h\in\set{F}_n$ is \emph{entropic} if there exists
  discrete random variables $X_1,\dots, X_n$ such that the joint
  entropy of $\{X_i: i\in\alpha\}$ is $h(\alpha)$ for all
  $\emptyset\neq\alpha\subseteq N$.  Furthermore, $h$ is \emph{almost
    entropic} if it is the limit of a sequence of entropic functions.
\end{definition}

Let $\Gamma^*_n$ be the set of all entropic functions. Its closure
$\bar{\Gamma}^*_n$ (i.e., the set of all almost entropic functions) is
well-known to be a closed, convex cone~\cite{Yeung97framework}.  An
important recent result with significant implications for
$\bar{\Gamma}^*_n$ is a series of linear information inequalities
obtained in~\cite{Matus07adhesivity}. Using this series,
$\bar\Gamma^*_n$ was proved to be non-polyhedral for $n\ge 4$. This
means that $\bar\Gamma^*_n$ cannot be defined by an intersection of
any finite number of linear information inequalities.

To simplify notation, set union will be denoted by concatenation, and
singletons and sets with one elements are not distinguished. For any
$\alpha,\beta \subseteq \N$ define
\begin{align*}
  h(\alpha |  \beta)&\triangleq h(\alpha \beta) - h(\beta)  \\
  I_h( \alpha ; \beta | \delta ) &\triangleq h(\alpha \delta) +
  h(\beta\delta) - h(\delta) - h(\alpha\beta\delta).
\end{align*}

\begin{proposition}\label{prop:basicineq}
  Let $h\in\bar\Gamma^*_n$. Then for all $\alpha,\beta,\delta
  \subseteq \N$,
  \begin{align}
    h(\alpha |  \beta) & \ge 0 \label{eq:condent}\\
    I_h( \alpha ; \beta | \delta ) & \ge 0. \label{eq:condinf}
  \end{align}
\end{proposition}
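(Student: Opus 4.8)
The plan is to prove Proposition \ref{prop:basicineq}, which states the two basic Shannon-type inequalities for almost entropic functions. The key observation is that both inequalities are preserved under limits: since $h(\alpha\mid\beta)$ and $I_h(\alpha;\beta\mid\delta)$ are continuous (indeed linear) functions of the coordinates of $h$, if they are nonnegative on every entropic function in a convergent sequence, they remain nonnegative in the limit. Therefore it suffices to establish \eqref{eq:condent} and \eqref{eq:condinf} for genuinely entropic $h\in\Gamma^*_n$, and the extension to $h\in\bar\Gamma^*_n$ follows immediately by taking limits.

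So first I would reduce to the entropic case. Fix an entropic $h$ with associated discrete random variables $X_1,\dots,X_n$, and for any subset $\alpha\subseteq\N$ write $X_\alpha\triangleq\{X_i:i\in\alpha\}$ so that $h(\alpha)=H(X_\alpha)$ is the joint entropy. Under this identification the two quantities in the statement become the standard conditional entropy and conditional mutual information:
\begin{align*}
  h(\alpha\mid\beta) &= H(X_\alpha\mid X_\beta), \\
  I_h(\alpha;\beta\mid\delta) &= I(X_\alpha;X_\beta\mid X_\delta).
\end{align*}
The claim then reduces to the two most elementary facts of Shannon information theory, so the proof amounts to invoking them.

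Next I would verify each inequality in turn. For \eqref{eq:condent}, conditional entropy $H(X_\alpha\mid X_\beta)$ is nonnegative because it is an average of the entropies of the conditional distributions of $X_\alpha$ given each value of $X_\beta$, and each such entropy is nonnegative for discrete random variables. For \eqref{eq:condinf}, conditional mutual information satisfies $I(X_\alpha;X_\beta\mid X_\delta)\ge 0$, which is exactly the statement that conditioning cannot make two variables ``anti-correlated'' on average; the standard argument writes it as an expected Kullback--Leibler divergence between the conditional joint distribution and the product of conditional marginals, and invokes the nonnegativity of relative entropy (Gibbs' inequality / Jensen applied to the convex function $-\log$). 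One must take a small amount of care that these identities hold with the convention $h(\emptyset)=0$, so that the expression $I_h(\alpha;\beta\mid\delta)$ collapses correctly when $\delta=\emptyset$ to the unconditional mutual information.

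I do not expect any genuine obstacle here: the statement is the polymatroidal (Shannon) inequality, and the only substantive content is the limiting argument, which is routine given that the inequalities are linear and $\bar\Gamma^*_n$ is defined as the closure of $\Gamma^*_n$. The mildly delicate point is purely bookkeeping---ensuring the set-theoretic conventions (concatenation for union, $h(\emptyset)=0$, no distinction between singletons and their elements) translate correctly into the information-theoretic quantities---rather than anything requiring a new idea.
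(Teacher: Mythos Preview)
Your proposal is correct; the paper does not actually prove this proposition, omitting it as a standard result (it explicitly notes that simple proofs are suppressed). Your limiting argument together with the classical nonnegativity of conditional entropy and conditional mutual information is exactly the standard justification, so there is nothing to compare.
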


Proposition \ref{prop:basicineq} gives the \emph{basic inequalities},
namely the non-negativity of (conditional) entropy~(\ref{eq:condent})
and of (conditional) mutual information~(\ref{eq:condinf}).  This set
of basic inequalities is redundant, and the unique minimal set of
basic inequalities that yields all basic inequalities (as linear
combinations) is as follows
\begin{align}
  h( i  |  \N \backslash i) & \ge 0 \label{eq:elementalH}\\
  I_h( i  ; j | \delta ) & \ge 0  \label{eq:elementalI}
\end{align}
where $i\neq j\in\N$ and $\delta\subseteq\N \backslash \{i,j\}$.
Inequalities (\ref{eq:elementalH}) and (\ref{eq:elementalI}) are
called \emph{elemental} basic inequalities. See~\cite{Yeung02first}
for discussion of basic and elemental inequalities.

For any linear expression $\sum_{\alpha\subseteq \N} c_\alpha
h(\alpha)$, define the projection \emph{onto} a subset $\beta$ of $\N$
\begin{align}
  \sum_{\alpha\subseteq \N} c_\alpha h(\beta \cap \alpha).
\end{align}
Similarly, define the projection \emph{away from} $\beta$
\begin{align}
  \sum_{\alpha\subseteq \N} c_\alpha h( \alpha \backslash \beta).
\end{align}

Clearly, if two linear expressions $\sum_{\alpha\subseteq \N} c_\alpha
h(\alpha)$ and $\sum_{\alpha\subseteq \N} d_\alpha h(\alpha)$ are the
same (i.e, the two expressions map to the same real number for all
$h\in\set{F}_n$), then their projections onto (or away from) any
subset $\beta$ are identical.

\section{Ingleton inequalities:\\Properties and the minimal
  set}\label{sec:three}
\begin{definition}[Ingleton Inequality]
  An Ingleton inequality
  $J(h;\setalpha_1,\setalpha_2,\setalpha_3,\setalpha_4) \ge 0 $ is a
  linear inequality over $\set{F}_n $ defined in terms of four subsets
  $\setalpha_1,\setalpha_2,\setalpha_3,\setalpha_4$ of the ground set
  $\N$ where the Ingleton term
  $J(h;\setalpha_1,\setalpha_2,\setalpha_3,\setalpha_4)$ is defined as
\begin{align*}
  &h(\setalpha_1\setalpha_2)+h(\setalpha_1\setalpha_3)+h(\setalpha_1\setalpha_4)+h(\setalpha_2\setalpha_3)+h(\setalpha_2\setalpha_4)\\
  &\quad-h(\setalpha_1) - h(\setalpha_2) - h(\setalpha_3\setalpha_4) -
  h(\setalpha_1\setalpha_2\setalpha_3) -
  h(\setalpha_1\setalpha_2\setalpha_4).
\end{align*}
\end{definition}

{\it Remark:} Originally, the Ingleton inequality refers to the case
when $\setalpha_1,\dots, \setalpha_4$ are distinct subsets of
singletons. We extend its use to allow arbitrary subsets $\setalpha_1,
\cdots, \setalpha_4$ of $\N$.

Within the class of Ingleton inequalities indexed by four subsets of
$\N$, some are trivial inequalities while some can be derived from
basic inequalities or other Ingleton inequalities.  For example, if
$\setalpha_1 = \setalpha_2 = \setalpha_4 = \setalpha_4$, then $J(h,
\setalpha_1,\dots, \setalpha_4) = 0$ for all $h$ and the corresponding
Ingleton inequality is trivial.


We now list several properties of the Ingleton term, leading to the
minimal set of non-redundant Ingleton inequalities.

\begin{property}[Symmetry]\label{prop:symmetry}
\begin{align*}
J(h;\setalpha_1,\setalpha_2, \setalpha_3 ,\setalpha_4) & =  J(h;\setalpha_2,\setalpha_1, \setalpha_3 ,\setalpha_4) \\
& =  J(h;\setalpha_1,\setalpha_2, \setalpha_3 ,\setalpha_4) \\
& =  J(h;\setalpha_1,\setalpha_2, \setalpha_4 ,\setalpha_3).
\end{align*}
\end{property}
Thus exchanging $\setalpha_1$ with $\setalpha_2$ or $\setalpha_3$ with
$\setalpha_4$ does not change the value of the Ingleton term, and the
number of distinct Ingleton inequalities is at most $2^{4n-2}$.

%
\begin{property}[Extending basic inequalities]\label{prop:basic}
 \begin{align*}
   J(h;\setalpha_1,\setalpha_2,\emptyset,\setalpha_4)  &= I_h(\setalpha_1;\setalpha_2|\setalpha_4) \\
   J(h;\setalpha_1,\setalpha_1,\emptyset,\setalpha_2)  &= h(\setalpha_1| \setalpha_2).
 \end{align*}
\end{property}
Thus all basic inequalities are special cases of Ingleton inequalities
via proper selection of $\setalpha_1,\dots,
\setalpha_4$. Hence, $\GammaIn_n \subset\Gamma_n$.

\begin{property} \label{Part1} Let $\setalpha_1,\dots,
  \setalpha_4,\beta \subseteq \N$.  If $\beta \subseteq
  \setalpha_1\cap \setalpha_2$, then
 \begin{equation*}
   J(h;\setalpha_1, \setalpha_2, \setalpha_3, \setalpha_4) 
   = J(h;\setalpha_1, \setalpha_2, \setalpha_3 \beta , \setalpha_4
   \beta ) + h(\beta | \setalpha_3,\setalpha_4).
 \end{equation*}
 If $\beta \subseteq \setalpha_1\cap \setalpha_3$, then we have
\begin{multline*}
  J(h;\setalpha_1, \setalpha_2, \setalpha_3, \setalpha_4) \\
  = J(h;\setalpha_1, \setalpha_2 \beta, \setalpha_3, \setalpha_4
  \beta) + I_h(\beta;\setalpha_4|\setalpha_2).
\end{multline*}
On the other hand, if  $\beta \subseteq \setalpha_3\cap \setalpha_4$, then
\begin{multline*}
  J(h;\setalpha_1, \setalpha_2, \setalpha_3, \setalpha_4) \\
  = J(h;\setalpha_1 \beta, \setalpha_2 \beta, \setalpha_3,
  \setalpha_4) + I_h(\beta;\setalpha_2|\setalpha_1) +
  h(\beta|\setalpha_2 ).
\end{multline*}
\end{property}
To summarize, if an element appears in at least two subsets of an
Ingleton term, then we add that element to the remaining two subsets
without increasing the value of the Ingleton term.

\begin{property}\label{Part4}
  Let $a\subseteq \setalpha_2, b\subseteq \setalpha_3$ and $c\subseteq
  \setalpha_4$, then
\begin{multline*}
  J(h;abc, \setalpha_2, \setalpha_3,\setalpha_4) = I_h(\setalpha_3;\setalpha_4|abc)+I_h(\setalpha_3;c|\setalpha_2 a)   \\
  +I_h(\setalpha_4;b|\setalpha_2)+h(a|\setalpha_3\setalpha_4).
\end{multline*}
Similarly, if $a\subseteq \setalpha_1, b\subseteq \setalpha_2$ and
$c\subseteq \setalpha_3$, then
\begin{multline*}
J(h;\setalpha_1, \setalpha_2, \setalpha_3, abc)  
   = I_h(\setalpha_2;c|\setalpha_1b) + I_h(\setalpha_3;b|\setalpha_1)\\+I_h(\setalpha_3;a|\setalpha_2c)  
 + I_h(\setalpha_1;\setalpha_2|\setalpha_3ab)+h(c|\setalpha_2).
\end{multline*}
\end{property}
Consequently, if one of the subsets in the Ingleton term
$J(h;\setalpha_1, \setalpha_2, \setalpha_3,\setalpha_4)$ is contained
in the union of the other three subsets, then the corresponding
Ingleton inequality is implied by the basic inequalities. In fact, we
will prove that the converse is also true in the following theorem.
\begin{theorem}\label{polymatroid}
  An Ingleton inequality
  $J(h;\setalpha_1,\setalpha_2,\setalpha_3,\setalpha_4)\geq 0$ is
  implied by the basic inequalities if and only if one of the four
  subsets $\setalpha_1, \setalpha_2, \setalpha_3, \setalpha_4$ is
  contained in the union of the other three subsets.
\end{theorem}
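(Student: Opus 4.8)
The implication from the containment condition to redundancy is essentially already established. If one of the four subsets is contained in the union of the other three, say $\setalpha_1\subseteq\setalpha_2\cup\setalpha_3\cup\setalpha_4$, then setting $a=\setalpha_1\cap\setalpha_2$, $b=\setalpha_1\cap\setalpha_3$ and $c=\setalpha_1\cap\setalpha_4$ gives $\setalpha_1=abc$ with $a\subseteq\setalpha_2$, $b\subseteq\setalpha_3$, $c\subseteq\setalpha_4$, so the first identity of Property~\ref{Part4} writes $J$ as a sum of conditional mutual informations and one conditional entropy, each non-negative by the basic inequalities. The case $\setalpha_4\subseteq\setalpha_1\cup\setalpha_2\cup\setalpha_3$ uses the second identity of Property~\ref{Part4}, and the two remaining cases follow from these by the symmetry of Property~\ref{prop:symmetry}. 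It therefore remains to prove the converse.

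For the converse I would argue by contraposition: assuming that \emph{none} of $\setalpha_1,\dots,\setalpha_4$ lies in the union of the other three, I would show the inequality is not implied by the basic inequalities. Since those inequalities cut out precisely the polymatroid cone $\Gamma_n$, any inequality they imply must hold throughout $\Gamma_n$; hence it suffices to produce a single $h\in\Gamma_n$ with $J(h;\setalpha_1,\setalpha_2,\setalpha_3,\setalpha_4)<0$. The hypothesis supplies exactly the freedom needed: each $\setalpha_i$ contains a \emph{private} element $x_i\in\setalpha_i\setminus\bigcup_{j\ne i}\setalpha_j$, these four elements are necessarily distinct, and $x_i\in\setalpha_j$ if and only if $i=j$.

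The plan is to pull back a fixed four-element Ingleton-violating polymatroid along the map that records which private elements a set contains. Let $r$ be the rank function on $\{1,2,3,4\}$ with $r(\{i\})=2$ for each $i$, with $r(\{i,j\})=3$ for every pair except $r(\{3,4\})=4$, and with $r=4$ on every triple and on $\{1,2,3,4\}$; a direct check, or the observation that $r$ is obtained by merging the four pairs of the V\'amos matroid, shows $r$ is monotone and submodular. For $\beta\subseteq\N$ let $T(\beta)=\{i:x_i\in\beta\}$ and define $h(\beta)=r(T(\beta))$. As membership is pointwise, $T(\beta\cup\gamma)=T(\beta)\cup T(\gamma)$ and $T(\beta\cap\gamma)=T(\beta)\cap T(\gamma)$, so $h$ inherits monotonicity, submodularity and $h(\emptyset)=0$ from $r$, whence $h\in\Gamma_n$. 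The private-element property forces $T(\setalpha_i)=\{i\}$, and therefore $T(\setalpha_i\setalpha_j)=\{i,j\}$, $T(\setalpha_1\setalpha_2\setalpha_3)=\{1,2,3\}$ and $T(\setalpha_1\setalpha_2\setalpha_4)=\{1,2,4\}$, so every term of $J$ collapses to the matching value of $r$. Substituting gives $J(h;\setalpha_1,\setalpha_2,\setalpha_3,\setalpha_4)=5\cdot 3-2\cdot 2-3\cdot 4=-1<0$, as required.

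The routine parts are the submodularity of $r$ and the fact that pullback by $T$ preserves it; both are short finite checks resting on the disjointness of the blocks. The genuine obstacle is the converse as a whole: one must recognise that the sole obstruction to redundancy is the presence of four private elements, and that four private elements are precisely enough room to embed a four-variable Ingleton violation. The linchpin is the identity $T(\setalpha_i)=\{i\}$ on every set occurring in $J$, which guarantees that the surrounding structure of $\N$ contributes nothing and that the computation reduces exactly to the fixed four-element example.
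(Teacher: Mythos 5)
Your proof is correct and takes essentially the same route as the paper: the \emph{if}-direction via Property~\ref{Part4} (with symmetry covering the remaining cases), and the converse by extracting a private element from each $\setalpha_i$ and lifting a four-variable polymatroid that satisfies the basic inequalities but violates Ingleton. The only difference is that you build the four-element counterexample explicitly (the contracted V\'amos polymatroid, with the pullback along $T$ spelled out), whereas the paper cites the Zhang--Yeung construction and sketches the lift; your version is, if anything, more self-contained.
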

\begin{proof}
  The {\it if}-part follows directly from Property~\ref{Part4}.  A
  sketch proof for the converse is given as follows.

  Suppose to the contrary that none of the four subsets
  $\setalpha_1,\setalpha_2,\setalpha_3,\setalpha_4$ are contained in
  the union of the remaining three subsets. To prove the converse, it
  suffices to show that there exists $h^* \in\set{F}_n$ satisfying
  every basic inequality but not the Ingleton inequality.
  To this end, write $\setalpha_i=\delta_i\cup \beta_i$, where (1)
  $|\delta_i|=1$, (2) $\beta_i\cap \delta_i =\emptyset$ and (3)
  $\emptyset\neq \delta_i \cap \setalpha_j=\emptyset$ for all $j\neq
  i$.  This term re-writing is possible by
  assumption. In~\cite{Zhang.Yeung98characterization}, an entropy
  function $g$ involving four elements was constructed which satisfies
  the basic inequalities but not the Ingleton inequality.  Since
  $J(h;\delta_1,\delta_2,\delta_3,\delta_4)$ involves only four
  elements, we can easily construct an $h^* \in \set{F}_n$ that
  satisfies all the basic inequalities and
 \begin{align*}
   J(h^*;\setalpha_1,\setalpha_2,\setalpha_3,\setalpha_4) & =
   J(h^*;\delta_1,\delta_2,\delta_3,\delta_4) \\ 
   & = J(g;\delta_1,\delta_2,\delta_3,\delta_4) < 0.
\end{align*}
Hence, $J(h^*;\delta_1,\delta_2,\delta_3,\delta_4) < 0$ and the theorem follows.
\end{proof}

Define the set of functions with non-negative Ingleton
term $$\GammaIn\triangleq\{h\in\set{F}_n:
J(h;\setalpha_1,\setalpha_2,\setalpha_3,\setalpha_4) \ge 0, \forall
\setalpha_i \subseteq \N\}.$$ It is clear that $\GammaIn$ is a closed
and convex cone. To characterize $\GammaIn$, we first define the
following sets of inequalities.
\begin{align}
\setIngleton_1 & =\{ J(h;i,j,\emptyset, \alpha)\ge 0  : \;   i,j \in \N, i\neq j,   \N \backslash \{i,j\} \} \\
\setIngleton_2 & =\{J(h; i, i,\emptyset, \N \backslash \{i\})  \ge 0 :\;  i\in \N \}.
\end{align}
By Property \ref{prop:basic}, $\setIngleton_1$ and $\setIngleton_2$
together imply all the basic inequalities.
 
\proposition\label{prop:suff} Let $\setalpha_1, \setalpha_2,
\setalpha_3, \setalpha_4$ be any four subsets of $\N $. Define
$\delta_1,\delta_2,\delta_3,\delta_4$ and $\beta$ as follows.
\begin{align}
\delta_i & \triangleq \alpha_i \backslash \bigcup_{j\neq i} \alpha_j \label{eqn:delta} \\
\beta &  \triangleq   \bigcup_i (\setalpha_i \backslash\delta_i ). \label{eqn:beta}
\end{align}
Then $\beta$ contains elements that appear in at least two subsets
$\setalpha_i$. Then, $J(h;\setalpha_1, \setalpha_2, \setalpha_3,
\setalpha_4) \ge J(h;\delta_1\beta, \delta_2\beta, \delta_3\beta,
\delta_4\beta)$.
\endproposition
\begin{proof}
  From Property \ref{Part1}, if an element appears in more than two
  subsets, then the element can be ``added'' to the remaining subsets
  without increasing the value of the Ingleton term. The result then
  follows.
\end{proof}

Let $\setIngleton_0$ be the set of Ingleton inequalities of the form
$J(h;\delta_1\beta,\delta_2\beta,\delta_3\beta,\delta_4\beta)\ge 0$
where $\delta_1, \delta_2, \delta_3, \delta_4$ and $ \beta$ are
disjoint subsets and $\delta_1, \delta_2, \delta_3, \delta_4$ are
nonempty. Furthermore, denote
$J(h;\delta_1\beta,\delta_2\beta,\delta_3\beta,\delta_4\beta)$ as $J(h
; \delta_1,\delta_2,\delta_3,\delta_4 | \beta)$.
\theorem\label{thm:suff} All Ingleton inequalities are implied by the
subset of Ingleton inequalities $\setIngleton =\setIngleton_0 \cup
\setIngleton_1 \cup \setIngleton_2$
\endtheorem
\begin{proof}
  If one subset $\alpha_i$ is contained in the union of the other
  three subsets, then by Theorem~\ref{polymatroid}, the inequality
  $J(h;\setalpha_1, \setalpha_2, \setalpha_3, \setalpha_4) \ge 0$ is
  implied by inequalities in $ \setIngleton_1 \cup
  \setIngleton_2$. Otherwise, by Proposition~\ref{prop:suff}, the
  inequality $J(h;\setalpha_1, \setalpha_2, \setalpha_3, \setalpha_4)
  \ge 0$ is implied by $J(h ; \delta_1,\delta_2,\delta_3,\delta_4 |
  \beta) \ge 0$ in $\setIngleton_0$ where $\beta, \delta_1 , \cdots,
  \delta_4$ are defined as in~(\ref{eqn:delta}) and~(\ref{eqn:beta}).
\end{proof}
Hence, $\GammaIn$ contains all $h\in\set{F}_n$ satisfying all
inequalities in $\setIngleton$. Additionally, the set is
full-dimensional, as shown below.

\proposition \label{Fulldim} There exists a function $h^* \in
\set{F}_n$ such that
$J(h^*;\setalpha_1,\setalpha_2,\setalpha_3,\setalpha_4) > 0$ for all
nontrivial inequalities (i.e., not always zero)
$J(h;\setalpha_1,\setalpha_2,\setalpha_3,\setalpha_4) \ge 0$.
\endproposition
\begin{proof} 
  Let $h^*(\alpha) \triangleq 2^n ( 1 - 2^{-|\alpha|})$. The
  proposition then follows by direct verification.
\end{proof}




\begin{corollary} \label{cor:one}
Suppose $c_i \ge 0$ for all $i\in\setI$. Then 
\begin{align}
  \sum_{i\in\setI} c_i J(h;\setalpha_1^i,\setalpha_2^i,\setalpha_3^i,\setalpha_4^i) = 0
\end{align}
for all $h\in\set{F}_n$ if and only if $c_i =0$ for all $i\in\setI$.
\end{corollary}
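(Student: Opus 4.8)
The plan is to reduce the claim to a single evaluation at the witness point $h^*$ furnished by Proposition~\ref{Fulldim}. The \emph{if} direction is immediate, since $c_i = 0$ for all $i$ makes the left-hand side the zero functional. All of the content lies in the \emph{only if} direction.

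First I would specialize the hypothesis, which is assumed to hold for \emph{all} $h\in\set{F}_n$, to the particular point $h^* \in \set{F}_n$ defined by $h^*(\alpha)=2^n(1-2^{-|\alpha|})$ in Proposition~\ref{Fulldim}. This is legitimate because $h^*$ is simply a fixed element of $\set{F}_n$; no entropic or almost-entropic property of $h^*$ is invoked, only that the linear identity holds there. I would then argue termwise: by Proposition~\ref{Fulldim}, every nontrivial Ingleton term satisfies $J(h^*;\setalpha_1^i,\setalpha_2^i,\setalpha_3^i,\setalpha_4^i) > 0$, so each summand $c_i J(h^*;\setalpha_1^i,\setalpha_2^i,\setalpha_3^i,\setalpha_4^i)$ is a product of a non-negative coefficient with a strictly positive number, hence non-negative. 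A sum of non-negative reals that equals zero must be termwise zero, which forces $c_i\cdot(\text{positive}) = 0$ and therefore $c_i = 0$ for each such $i$.

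The only point needing care --- and the step I expect to be the main obstacle --- is the bookkeeping around \emph{trivial} Ingleton terms, i.e.\ those that are identically zero (for instance when the four subsets coincide). Such a term contributes $0$ at $h^*$ no matter what $c_i$ is, so a positive coefficient on it would be invisible in the sum and could not be ruled out. Consequently the conclusion $c_i=0$ can only be forced for indices whose Ingleton term is nontrivial. Under the natural reading that the family $\{J(\cdot;\setalpha_1^i,\setalpha_2^i,\setalpha_3^i,\setalpha_4^i):i\in\setI\}$ consists of nontrivial terms --- equivalently, after discarding the identically-zero ones, which play no role in the identity --- the argument yields $c_i = 0$ for all $i\in\setI$, completing the equivalence.
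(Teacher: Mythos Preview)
Your proof is correct and follows essentially the same approach as the paper: evaluate the identity at the witness point $h^*$ from Proposition~\ref{Fulldim} and use strict positivity of each Ingleton term together with $c_i\ge 0$ to force all $c_i=0$. Your explicit handling of the trivial-term caveat is in fact more careful than the paper's own sketch, which tacitly assumes all terms are nontrivial.
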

\begin{proof} 
  In the proof of Proposition~\ref{Fulldim}, we constructed $h$
  such that
  $J(h;\setalpha_1^i,\setalpha_2^i,\setalpha_3^i,\setalpha_4^i) \ge 1
  $ for all $i$. Therefore,
\begin{align}
  \sum_i c_i
  J(h;\setalpha_1^i,\setalpha_2^i,\setalpha_3^i,\setalpha_4^i) \ge
  \sum_i c_i.
\end{align}
Hence, if $\sum_i c_i
J(h;\setalpha_1^i,\setalpha_2^i,\setalpha_3^i,\setalpha_4^i) = 0$,
then $\sum_i c_i = 0$ or equivalently, $c_i=0$ for all $i$.
\end{proof}

By Proposition \ref{Fulldim},  we also have the following lemma.
\begin{lemma} \label{Lemmaforproof} Suppose that $\delta_1, \delta_2,
  \delta_3 , \delta_4, \beta $ are disjoint. Then
\begin{enumerate}
\item $J(h; \delta_1, \delta_2 , \delta_3 , \delta_4 | \beta) = 0$ if
  and only if (1) either $\delta_1$ or $\delta_2$ are empty, and (2)
  either $\delta_3$ or $\delta_4$ are empty.
\item $J(h ; \delta_1,\delta_2,\delta_3,\delta_4 | \beta) = J(h ;
  \delta_1^\prime ,\delta_2^\prime,\delta_3^\prime,\delta_4^\prime |
  \beta^\prime) $ if and only if (1) $\beta=\beta^\prime$, (2)
  $\{\delta_1, \delta_2\} = \{\delta_1^\prime, \delta_2^\prime\} $ and
  (3) $\{\delta_3, \delta_4\} = \{\delta_3^\prime, \delta_4^\prime\}
  $.
\end{enumerate}
\end{lemma}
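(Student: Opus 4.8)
The plan is to regard each term $J(h;\delta_1,\delta_2,\delta_3,\delta_4|\beta)=J(h;\delta_1\beta,\delta_2\beta,\delta_3\beta,\delta_4\beta)$ as a linear form in the coordinate functions $h(\gamma)$, $\gamma\subseteq\N$, and to read off its signed support. Expanding the ten summands and using that $\delta_1,\delta_2,\delta_3,\delta_4,\beta$ are disjoint, every set occurring is $\bigcup_{i\in S}\delta_i\cup\beta$ for $S\in\{\{1,2\},\{1,3\},\{1,4\},\{2,3\},\{2,4\}\}$ (coefficient $+1$) or $S\in\{\{1\},\{2\},\{3,4\},\{1,2,3\},\{1,2,4\}\}$ (coefficient $-1$). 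When all four $\delta_i$ are nonempty the map $S\mapsto\bigcup_{i\in S}\delta_i\cup\beta$ is injective, so the ten coordinates are distinct and the form has exactly this signed support; this is the generic picture. Since the coordinate functions $h\mapsto h(\gamma)$ are linearly independent, a form is identically zero precisely when all coefficients cancel, so the whole question reduces to tracking which of the ten sets \emph{collide} when some $\delta_i$ are empty.

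For part (1), the \emph{if} direction follows by symmetry (Property~\ref{prop:symmetry}): assuming $\delta_1=\emptyset$, a one-line cancellation gives $J=I_h(\delta_3;\delta_4|\beta)$, which vanishes identically once $\delta_3$ or $\delta_4$ is empty. For the \emph{only if} direction I would prove the contrapositive in two independent halves. If $\delta_3$ and $\delta_4$ are both nonempty, the coordinate $h(\delta_3\delta_4\beta)$ coincides with none of the other nine sets---this check uses only disjointness---so it survives with coefficient $-1$ and $J\not\equiv0$. If $\delta_1$ and $\delta_2$ are both nonempty, collapsing the (possibly empty) $\delta_3,\delta_4$ reduces $J$ to either the generic full-support form, or $I_h(\delta_1;\delta_2|\delta_3\beta)$, or $I_h(\delta_1;\delta_2|\beta)$, each nonzero because $\delta_1,\delta_2\neq\emptyset$ keeps a coordinate of the form $h(\delta_1\delta_2\cdots\beta)$ alive. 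Together these show $J\equiv0$ forces ($\delta_1$ or $\delta_2$ empty) and ($\delta_3$ or $\delta_4$ empty).

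For part (2) the \emph{if} direction is again immediate from Property~\ref{prop:symmetry}. For the converse I would restrict to the nondegenerate regime in which $\delta_1,\dots,\delta_4$ and $\delta_1^\prime,\dots,\delta_4^\prime$ are all nonempty---the case relevant to $\setIngleton_0$; without it the claim genuinely fails, since e.g. $\delta_1=\emptyset$ makes $J$ independent of $\delta_2$. Equal forms have identical signed support, so I recover the data from that support: $\beta$ is the intersection of all ten coordinate sets; the inclusion-minimal coordinates are exactly the negative sets $\delta_1\beta,\delta_2\beta,\delta_3\delta_4\beta$; the unique pair among these whose union is again a (positive) coordinate is $\{\delta_1\beta,\delta_2\beta\}$, which recovers $\{\delta_1,\delta_2\}$; and the remaining minimal set $\delta_3\delta_4\beta$, refined by the positive coordinates, splits off the pair $\{\delta_3,\delta_4\}$. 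Matching the supports of the two forms then yields $\beta=\beta^\prime$, $\{\delta_1,\delta_2\}=\{\delta_1^\prime,\delta_2^\prime\}$ and $\{\delta_3,\delta_4\}=\{\delta_3^\prime,\delta_4^\prime\}$.

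The hard part will be the collision bookkeeping: the clean ``ten distinct coordinates'' picture holds only when the $\delta_i$ are nonempty, and the emptiness patterns must be enumerated---up to the $\delta_1\leftrightarrow\delta_2$ and $\delta_3\leftrightarrow\delta_4$ swaps of Property~\ref{prop:symmetry}---to see exactly when coordinates merge and cancel. Proposition~\ref{Fulldim} (via the witness $h^*$ and Corollary~\ref{cor:one}) enters only to certify that a form with nonzero signed support is strictly positive at $h^*$, i.e. a genuinely nontrivial inequality, which is the property the minimal-set arguments later invoke.
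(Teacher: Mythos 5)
Your proposal is correct, and it is worth noting that the paper itself gives no proof of this lemma: it merely asserts that the lemma follows ``by Proposition~\ref{Fulldim},'' which at best supplies the witness $h^*$ certifying that a form with nonvanishing signed support is nontrivial, and leaves all of the actual content (when the support vanishes, and when two parametrizations yield the same form) unproved. Your route --- expanding $J(h;\delta_1\beta,\dots,\delta_4\beta)$ into the ten coordinates $h\bigl(\bigcup_{i\in S}\delta_i\cup\beta\bigr)$ with $S$ ranging over the five positive and five negative index sets, invoking linear independence of the coordinate functions, and reducing everything to collision bookkeeping --- is the natural self-contained argument, and the individual checks are sound: the map $S\mapsto\bigcup_{i\in S}\delta_i\cup\beta$ is injective when the $\delta_i$ are nonempty and disjoint; the coefficient of $h(\delta_3\delta_4\beta)$ survives whenever $\delta_3,\delta_4\neq\emptyset$; the degenerate reductions to $I_h(\delta_3;\delta_4|\beta)$ and $I_h(\delta_1;\delta_2|\delta_3\beta)$ are correct; and the reconstruction of $(\beta,\{\delta_1,\delta_2\},\{\delta_3,\delta_4\})$ from the signed support (via the three inclusion-minimal coordinates and the unique pair among them whose union is again a coordinate) pins down part (2). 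You are also right that part (2) as literally stated is false when some $\delta_i$ is empty (e.g.\ $\delta_1=\emptyset$ makes $J$ independent of $\delta_2$); the nonemptiness hypothesis is implicit in the paper, since the notation $J(h;\delta_1,\dots,\delta_4|\beta)$ is introduced only for $\setIngleton_0$, where all $\delta_i$ are required nonempty, and all later uses of part (2) occur in that regime. The one cosmetic point to add in a full writeup is the convention $h(\emptyset)=0$, so that a coordinate indexed by the empty set contributes nothing; your case analysis never relies on such a coordinate surviving, so nothing breaks.
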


So far, we have proved that the set of inequalities $\setIngleton$
implies all Ingleton inequalities and hence characterizes
$\GammaIn$. In the following, we will prove that $\setIngleton$ is
indeed the unique minimal set characterizing $\GammaIn$.



To obtain the minimal set of Ingleton inequalities, we need to
overcome an obstacle -- that two different choices of $\{\setalpha_i,
i=1, \cdots, 4\}$ might give the same Ingleton inequality.  Therefore,
the ``repeated'' inequalities must be
removed. 
\begin{example}\label{eg:first}
  Suppose that $\setalpha_1=\{1,5\}, \setalpha_2=\{2,5\},
  \setalpha_3=\{3,5\}$ and $\setalpha_4=\{4,5\}$.  If
  $\beta_i=\setalpha_i$ for $i = 1,2,3$ and $\beta_4 = \{4\}$, then
  $J(h; \setalpha_1, \cdots, \setalpha_4) = J(h; \beta_1, \cdots,
  \beta_4)$.
\end{example}
Fortunately, by our choice of $\setIngleton_1$ and $\setIngleton_2$,
no two inequalities are the same, and by Lemma~\ref{Lemmaforproof} ,
two inequalities are the same if and only the subsets in the Ingleton
term are permutations of each other as specified in
Lemma~\ref{Lemmaforproof}. Hence, those repeated inequalities can be
easily removed.  From now on, we assume that no inequalities in
$\setIngleton$ are the same by removing all these duplications.



\begin{theorem}\label{thm:minimality}
  No Ingleton inequality in $\setIngleton$ can be implied by others in
  $\setIngleton$. Consequently, the set of Ingleton inequalities
  $\setIngleton$ is the unique minimal set of Ingleton inequalities
  that characterizes $\GammaIn$.
\end{theorem}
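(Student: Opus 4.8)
The plan is to establish irredundancy inequality‑by‑inequality: for each $J_*\in\setIngleton$, show that $J_*\ge 0$ is \emph{not} implied by the remaining inequalities of $\setIngleton$. Since $\GammaIn$ is a polyhedral cone and all the inequalities are homogeneous, the (homogeneous) Farkas lemma lets me replace the logical notion of implication by an algebraic one: $J_*\ge 0$ is implied by $\{J_k\ge 0 : k\neq *\}$ if and only if $J_*=\sum_{k\neq *}\lambda_k J_k$ as linear functionals on $\set{F}_n$ for some scalars $\lambda_k\ge 0$. I would therefore assume such a nonnegative relation and derive a contradiction. A useful first reduction is that every member of $\setIngleton$ is itself an Ingleton term, since the basic inequalities of $\setIngleton_1,\setIngleton_2$ are Ingleton terms by Property~\ref{prop:basic}; the whole relation thus lives among Ingleton terms, where Corollary~\ref{cor:one} and Lemma~\ref{Lemmaforproof} apply. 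I would then \emph{localize}: writing $S_*$ for the support of $J_*$ (the union of its four subsets), project the relation onto $S_*$, which fixes $J_*$ and replaces each $J_k$ by its projection, again an Ingleton term on $S_*$ (or identically zero, and then discarded).

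For a genuine inequality $J_*=J(h;\delta_1,\delta_2,\delta_3,\delta_4\mid\beta)\in\setIngleton_0$ the disjoint normal form is the key. After projecting onto $S_*$, I would split the surviving terms into those that project to genuine Ingleton terms and those that project to basic inequalities. By Lemma~\ref{Lemmaforproof}(2) the only genuine term carrying the same invariants as $J_*$ — the same $\beta$ and the same unordered pairs $\{\delta_1,\delta_2\}$, $\{\delta_3,\delta_4\}$ — is $J_*$ itself, which is excluded by hypothesis, and Corollary~\ref{cor:one} forbids any nontrivial positive cancellation among genuine terms. Hence all genuine contributions must vanish and the relation collapses to an expression of $J_*$ as a nonnegative combination of (projected) basic inequalities. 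This contradicts Theorem~\ref{polymatroid}, since a genuine Ingleton inequality, having none of its four subsets contained in the union of the other three, is not implied by the basic inequalities.

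For a basic inequality $J_*\in\setIngleton_1\cup\setIngleton_2$ I would first show that no genuine term can appear on the right‑hand side. Here a small set of coefficient \emph{fingerprints} is convenient: for instance the coefficient of $h(\N)$ is $+1$ for members of $\setIngleton_2$, $-1$ for members of $\setIngleton_1$, and $0$ for every genuine term, whose largest arguments are $\delta_1\delta_2\delta_3\beta$ and $\delta_1\delta_2\delta_4\beta$, neither of which can equal $\N$. Comparing such coefficients, and invoking Theorem~\ref{polymatroid} to exclude any net genuine content (a genuine term is not implied by the basic inequalities, whereas $J_*$ is nonnegative on every polymatroid), forces every genuine coefficient to vanish. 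The relation then expresses one elemental basic inequality as a nonnegative combination of the other elemental basic inequalities, contradicting the classical minimality of the elemental set.

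Finally, irredundancy together with the sufficiency of Theorem~\ref{thm:suff} shows that $\setIngleton$ is a minimal description of $\GammaIn$; uniqueness follows from the standard fact that, for a full‑dimensional pointed polyhedral cone (full‑dimensionality being guaranteed by the interior point of Proposition~\ref{Fulldim}) with duplicates removed via Lemma~\ref{Lemmaforproof}, the facet‑defining inequalities are unique up to positive scaling. The main obstacle I anticipate is the genuine case: ruling out that several \emph{different} legitimate genuine Ingleton terms combine with basic inequalities to reproduce $J_*$. Controlling this cancellation is exactly the purpose of the localization onto $S_*$, the equality/triviality dichotomy of Lemma~\ref{Lemmaforproof}, and the positive‑independence statement of Corollary~\ref{cor:one}, and making that analysis completely airtight is the crux of the argument.
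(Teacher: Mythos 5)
Your overall skeleton (Farkas' lemma, the split into basic versus genuine members of $\setIngleton$, the use of Corollary~\ref{cor:one}, Lemma~\ref{Lemmaforproof}, Theorem~\ref{polymatroid}, the classical minimality of the elemental inequalities, and full-dimensionality for uniqueness) matches the paper's, but the central step of the genuine case is not actually proved, and the tools you invoke cannot close it. You assert that ``Corollary~\ref{cor:one} forbids any nontrivial positive cancellation among genuine terms,'' hence all genuine contributions vanish. Corollary~\ref{cor:one} only says that a \emph{nonnegative} combination of Ingleton terms that is \emph{identically zero} must have all coefficients zero; the relation you are analysing is $J_*=\sum_{k\neq *}\lambda_k J_k$, which rewritten as $J_*-\sum_k\lambda_k J_k=0$ has mixed signs, so the corollary does not apply. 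Likewise Lemma~\ref{Lemmaforproof}(2) only rules out a genuine term on the right being \emph{literally equal} to $J_*$; it says nothing about several \emph{distinct} genuine terms (with different $\delta_i$'s or $\beta$'s) combining with basic inequalities to reproduce $J_*$. That possibility is exactly the crux, you flag it yourself as ``the main obstacle,'' and you leave it open. (Your intended conclusion is also inverted: if the genuine coefficients did all vanish you would contradict Theorem~\ref{polymatroid}, but nothing you cite forces them to vanish.) The paper closes this gap with an entirely separate argument: after killing the $\setI_2$ terms with the modular function $h^*(\alpha)=|\alpha|$ and the $\setI_1$ terms by projecting onto four well-chosen singletons and invoking the $n=4$ base case, it runs an induction on $n$ (base cases $n\le 5$ checked by brute force, with sub-cases $|\beta|\ge 2$ and $|\beta|\le 1$, each handled by projecting away from two elements $a,b$) to show every surviving genuine term must coincide with $J_*$ itself. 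None of that machinery, nor a substitute for it, appears in your proposal.

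Two smaller problems. First, your ``fingerprint'' for the basic case is off: a member of $\setIngleton_1$ is $I_h(i;j\mid\mu)$ with $\mu$ an \emph{arbitrary} subset of $\N\setminus\{i,j\}$, so its coefficient at $h(\N)$ is $-1$ only when $\mu=\N\setminus\{i,j\}$ and $0$ otherwise; a single coefficient comparison plus a vague appeal to Theorem~\ref{polymatroid} does not force all genuine coefficients to vanish. The paper instead projects the whole identity away from the element $i^0$, which annihilates the left-hand side, turns the identity into a genuinely nonnegative combination equal to zero, and only \emph{then} applies Corollary~\ref{cor:one} and Lemma~\ref{Lemmaforproof} to kill the $\setI_0$ terms. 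Second, your ``localization onto $S_*$'' needs care: the projection of a genuine term onto $S_*$ need not remain genuine or stay in disjoint normal form, so the subsequent invariant comparison via Lemma~\ref{Lemmaforproof} is not directly available after that step.
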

\begin{proof}
  The proof for Theorem \ref{thm:minimality} when $n\le 5$ can be
  obtained by brute-force verification. When $n >5$, we will prove
  Theorem~\ref{thm:minimality} by considering several cases.  The
  proof is rather lengthy and will be given in the next section.
\end{proof}

By direct counting, the size of $\setIngleton$ can be shown to be
\begin{equation*}
n+ {n\choose 2} 2^{n-2} + \frac{1}{4}6^n - 5^n + \frac{3}{2}4^n - 3^n +\frac{1}{4}2^{n}.
\end{equation*}


\section{Proof of Theorem \ref{thm:minimality} }\label{sec:proof}
Suppose to the contrary of Theorem \ref{thm:minimality} that an
inequality $J(h;\setalpha_1,\setalpha_2,\setalpha_3,\setalpha_4)\ge 0$
in $\setIngleton$ is implied by other inequalities
$J(h;\setalpha_1^i,\setalpha_2^i,\setalpha_3^i,\setalpha_4^i)\ge 0$ in
$\setIngleton$.  By Farkas' Lemma (\cite{Schrijver2003}, p.61), there
exists non-negative constants $c_i$ such that
\begin{align}
J(h;\setalpha_1,\setalpha_2,\setalpha_3,\setalpha_4) &=
\sum_{i\in\setI} c_i
J(h;\setalpha_1^i,\setalpha_2^i,\setalpha_3^i,\setalpha_4^i) \\  
&=  \sum_{i\in \setI_0} c_i J(h;\delta_1^i , \delta_2^i , \delta_3^i , \delta_4^i| \beta^i) \nonumber \\
& \qquad  +  \sum_{i\in \setI_1} c_i J(h; k^i , l^i, \emptyset, \mu^i)  \nonumber \\
& \quad \qquad +   \sum_{i\in \setI_2} c_i J(h; {m^i},  {m^i}, \emptyset,  \N\backslash \{ m^i \}).\label{eqn:threecases}
\end{align}

To prove Theorem \ref{thm:minimality}, it suffices to show that $c_i =
0$ for all $i\in\setI \triangleq \setI_0 \cup \setI_1 \cup \setI_2$
and hence $ J(h;\setalpha_1,\setalpha_2,\setalpha_3,\setalpha_4) = 0$
contradicting the fact that $\setIngleton$ does not contain the
trivial inequality $0\geq 0$. We will prove
Theorem~\ref{thm:minimality} by considering three exhaustive cases,
{Case~A}, {Case~B} and {Case~C}.

\renewcommand{\thesubsectiondis}{Case \Alph{subsection}.}

\subsection{$J(h;\setalpha_1, \setalpha_2, \setalpha_3, \setalpha_4)\ge 0$ is in $\setIngleton_1$} 

In this case, the inequality $J(h;\setalpha_1, \setalpha_2,
\setalpha_3, \setalpha_4)\ge 0$ is of the form $J(h; i^0, j^0,
\emptyset, \mu^0 )\ge 0$ for some $i^0 \neq j^0 \in \N$ and $\mu
\subseteq \N \backslash \{i^0,j^0\}$.

Now, consider again the equality (\ref{eqn:threecases}) and project
both sides of it away from $i^0$. Then the left hand side becomes zero
by Lemma~\ref{Lemmaforproof}. Hence, we have
\begin{multline*} 
  0  =   \sum_{i\in \setI_0} c_i J(h; \delta_1^i \backslash i^0,\delta_2^i \backslash i^0,\delta_3^i \backslash i^0 ,\delta_4^i \backslash i^0 | \beta^i \backslash i^0) \\
    +  \sum_{i\in \setI_1} c_i J(h; k^i \backslash i^0 , l^i \backslash i^0, \emptyset, \mu^i \backslash i^0)   \\
   + \sum_{i\in \setI_2} c_i J(h; {m^i}\backslash i^0,
  {m^i}\backslash i^0, \emptyset, \N\backslash \{ m^i, i^0 \}).
\end{multline*}

By Corollary  \ref{cor:one}, 
\begin{equation*}
J(h; \delta_1^i \backslash i^0,\delta_2^i \backslash i^0,\delta_3^i \backslash i^0 ,\delta_4^i \backslash i^0 | \beta^i \backslash i^0) = 0
\end{equation*}
for all $i\in \setI_0$.  Then, by Lemma~\ref{Lemmaforproof}, we have
either $\delta_1^i $ or $\delta_2^i $ is empty, and either $\delta_3^i
$ or $\delta_4^i $ is empty.  However, this is impossible as
$\delta_1^i,\delta_2^i,\delta_3^i,\delta_4^i$ are disjoint and
nonempty.  Therefore, $c_i = 0$ for all $i\in\setI_0$.  Consequently,
\begin{multline*}
  J(h; i^0, j^0, \emptyset, \mu )
  =    \sum_{i\in I_1} c_i J(h; k^i , l^i, \emptyset, \mu^i)  \\
  + \sum_{i\in I_2} c_i J(h; {m^i}, {m^i}, \emptyset, \N\backslash \{
  m^i \}).
\end{multline*}
It is known that the elemental basic inequalities $\setIngleton_1 \cup
\setIngleton_2$ are not redundant~\cite{Yeung02first}.  The theorem is
thus proved in this case.

\subsection{$J(h;\setalpha_1, \setalpha_2, \setalpha_3,
  \setalpha_4)\ge 0$ is in $\setIngleton_2$} 

In this case, we can write the inequality as $J(h; i^0 , i^0 ,
\emptyset, \N\backslash \{i^0\} )\ge 0$ for some $i^0 \in \N$.  Again,
we can project both sides of the inequality away from ${i^0}$. Using
the same argument, we can conclude that $c_i = 0$ for
$i\in\setI_0$. Then the theorem again follows from that
$\setIngleton_1\cup \setIngleton_2$ is not redundant.

\subsection{$J(h;\setalpha_1, \setalpha_2, \setalpha_3, \setalpha_4)\ge 0$ is in $\setIngleton_0$}

In this case, $J(h;\setalpha_1, \setalpha_2, \setalpha_3,
\setalpha_4)\geq 0$ can be rewritten in the form $J(h;\delta_1,
\delta_2, \delta_3, \delta_4 | \beta)\ge 0$ for some disjoint subsets
$\delta_1, \delta_2, \delta_3, \delta_4,\beta$ such that $\delta_i$
are all nonempty. Again, assume that $J(h;\delta_1, \delta_2,
\delta_3, \delta_4|\beta) $ can be written as a linear combination of
other Ingleton inequalities as in (\ref{eqn:threecases}).

First, we will show that $c_i = 0$ for all $i\in  \setI_2$.
Let $h^*$ be the entropy function for random variables $\{X_1,\dots ,
X_n\}$ such that $h^*(\alpha) = H(X_i : i\in\alpha) = |\alpha|$.
Then, it is straightforward to prove that
\begin{align*}
J(h^*;\delta_1, \delta_2, \delta_3, \delta_4|\beta) & = 0, \\
J(h^*;\delta_1^i, \delta_2^i, \delta_3^i, \delta_4^i |\beta^i) & = 0  \\
J(h^* ; k^i , l^i, \emptyset, \mu^i) & = 0 .
\end{align*}
Substituting back into (\ref{eqn:threecases}), we have
\begin{equation*}
  0 = \sum_{i\in I_2} c_i. 
\end{equation*}
Consequently, $c_i = 0$ for all $i\in\setI_2$ and hence
\begin{multline}\label{caseC}
  J(h;\delta_1, \delta_2, \delta_3, \delta_4|\beta)  
   =   \sum_{i\in \setI_0} c_i J(h;\delta_1^i, \delta_2^i, \delta_3^i, \delta_4^i |\beta^i) \\
    +  \sum_{i\in \setI_1} c_i J(h; k^i , l^i, \emptyset, \mu^i)  .
\end{multline}
 
Our second task is to show that $c_i = 0$ for all $i\in \setI_1$.
Again, we will use a similar projection trick.  Consider any $i^0 \in
\setI_1$ and the corresponding inequality
\begin{equation*}
  J(h; k^{i_0} , l^{i_0}, \emptyset, \mu^{i_0})  \ge 0.
\end{equation*}

We can project both sides of (\ref{caseC}) onto $k^{i_0}$ and
$l^{i_0}$. Clearly, the right hand side contains the term $c_{i_0}
J(h; k^{i_0} , l^{i_0}, \emptyset, \emptyset) $.  Thus, the left hand
side after projection cannot be zero.  As a result, (1) $\{ k^{i_0},
l^{i_0}\} \cap (\delta_1\delta_2)$ and $\{ k^{i_0}, l^{i_0} \} \cap
(\delta_3 \delta_4)$ are nonempty, and (2) $k^{i_0}$ and $l^{i_0}$ are
not in the same subset.

Therefore, we may assume without loss of generality that $k^{i_0} \in
\delta_1$ and that $l^{i_0} \in \delta_3$.  Since $\delta_2$ and
$\delta_4$ are nonempty, we can pick $a\in \delta_2$ and
$b\in\delta_4$. Then we can project (\ref{caseC}) onto $\{ k^{i_0},
l^{i_0} , a, b \}$.
After projection, the left hand side becomes
\begin{equation*}
  J(h;  k^{i_0}, a,  l^{i_0} , b  )
\end{equation*}
and the right hand side is a summation of several Ingleton
inequalities (involving at most four variables) including $ c_{i_0}
J(h; k^{i_0} , l^{i_0}, \emptyset, \mu^i)$.  As
Theorem~\ref{thm:minimality} holds when $n=4$, we have $c_{i_0} = 0$.
Repeating the same argument for all $i\in \setI_1$, we prove that $c_i =
0$ for all $i\in\setI_1$.

Now (\ref{caseC}) can  be rewritten as  
 \begin{equation}\label{caseCone}
J(h;\delta_1 , \delta_2, \delta_3, \delta_4 | \beta)  =   \sum_{i\in \setI_0} c_i J(h;\delta_1^i\beta^i, \delta_2^i\beta^i, \delta_3^i\beta^i, \delta_4^i\beta^i) 
\end{equation}

Assume that $c_i >0 $ for all $i\in\setI_0$ in (\ref{caseCone}).  Now,
to prove Theorem~\ref{thm:minimality}, it suffices to prove the
following statement.

\begin{proposition}[Induction Hypothesis ${\cal H}(n)$]
  Let $n$ be the number of set elements involved in the left hand side
  of the expression in (\ref{caseCone}).  Suppose that the equality
  (\ref{caseCone}) holds. Then
  \begin{align*}
    J(h;\delta_1 , \delta_2 , \delta_3 , \delta_4 | \beta) =
    J(h;\delta_1^i , \delta_2^i , \delta_3^i, \delta_4^i | \beta^i) .
  \end{align*}
\end{proposition}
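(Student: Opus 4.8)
The plan is to prove $\mathcal{H}(n)$ by strong induction on $n$, the number of elements in the left-hand universe $U \triangleq \delta_1\delta_2\delta_3\delta_4\beta$. The base cases $n\le 5$ are supplied by the brute-force verification already invoked for Theorem~\ref{thm:minimality}, so I assume $n>5$ and that $\mathcal{H}(m)$ holds for every $m<n$. By Lemma~\ref{Lemmaforproof}(2) the desired conclusion $J(h;\delta_1,\delta_2,\delta_3,\delta_4|\beta)=J(h;\delta_1^i,\delta_2^i,\delta_3^i,\delta_4^i|\beta^i)$ is equivalent to the three structural identities $\beta^i=\beta$, $\{\delta_1^i,\delta_2^i\}=\{\delta_1,\delta_2\}$ and $\{\delta_3^i,\delta_4^i\}=\{\delta_3,\delta_4\}$. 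Since $c_i>0$ for every $i\in\setI_0$, it suffices to recover these set equalities for each $i$.

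As a preliminary I confine the right-hand side to the universe $U$. Projecting (\ref{caseCone}) onto $U^c$ annihilates the left-hand side, so Corollary~\ref{cor:one}, applied to the surviving nontrivial terms (whose coefficients are strictly positive), forces every such projected term to be trivial. Lemma~\ref{Lemmaforproof}(1) then pins down how each $\delta_j^i$ and $\beta^i$ may meet $U^c$, and a short argument excluding genuine overflow yields $\delta_1^i\delta_2^i\delta_3^i\delta_4^i\beta^i\subseteq U$ for all $i$. This guarantees that later one-element projections do not introduce stray coordinates that would obstruct matching against the induction hypothesis.

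For the inductive step I use a single-element projection. Because $n>5$ and the four $\delta_j$ are nonempty, pigeonhole produces an element $e\in U$ whose deletion leaves the left-hand side a nondegenerate $\setIngleton_0$ inequality on $n-1$ elements: take $e\in\beta$ if $\beta\neq\emptyset$, otherwise take $e$ from some $\delta_j$ with $|\delta_j|\ge 2$. Projecting (\ref{caseCone}) away from $e$, the left-hand side becomes a $\setIngleton_0$ inequality on $U\setminus e$, and each right-hand term becomes an Ingleton term on at most $n-1$ elements. Terms in which $e$ was a singleton $\delta_j^i$ degenerate: one of their four subsets is then contained in the union of the others, so by Theorem~\ref{polymatroid} and Property~\ref{Part4} they expand into nonnegative combinations of elemental inequalities in $\setIngleton_1\cup\setIngleton_2$. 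Re-running the Case~A/B/C elimination on the reduced ground set (already established there) drives these elemental coefficients to zero, leaving a pure-$\setIngleton_0$ identity to which $\mathcal{H}(n-1)$ applies. Thus the projected left-hand side equals each surviving projected term, and Lemma~\ref{Lemmaforproof}(2) delivers $\beta^i\setminus e=\beta\setminus e$ together with the matching of the $e$-deleted $\delta$-partition.

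Repeating this for enough choices of $e$, and supplementing with projections onto two- and four-element subsets (exactly as in Case~C's second task) for those indices whose $\delta_j^i$ would degenerate under every single-element projection, reconstructs the full equalities $\beta^i=\beta$, $\{\delta_1^i,\delta_2^i\}=\{\delta_1,\delta_2\}$ and $\{\delta_3^i,\delta_4^i\}=\{\delta_3,\delta_4\}$; Lemma~\ref{Lemmaforproof}(2) then gives $\mathcal{H}(n)$. The main obstacle is precisely this reassembly. Each projection determines the partition only on the deleted ground set and only up to the symmetry swaps of Property~\ref{prop:symmetry}, so I must verify that the partial matches obtained from different $e$ are mutually consistent — the swap forced by one projection must agree with that forced by another — and that the degenerate elemental terms spawned by distinct projections are each eliminated. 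Ensuring that every index $i$ (all with $c_i>0$) appears in nondegenerate form in at least one projection, so that no term escapes identification, is the delicate point that drives the lengthy case analysis.
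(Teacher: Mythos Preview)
Your strategy---strong induction on $n$ with base cases $n\le 5$ handled by brute force, a preliminary step confining the right-hand side to the left-hand universe, and then projecting away single elements to invoke $\mathcal{H}(n-1)$---is exactly the paper's approach. The differences are in execution, and they matter.

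First, your preliminary step projects onto $U^c$ and invokes Corollary~\ref{cor:one} and Lemma~\ref{Lemmaforproof}(1). But Lemma~\ref{Lemmaforproof}(1) only tells you that at least one of $\delta_1^i\cap U^c,\delta_2^i\cap U^c$ and at least one of $\delta_3^i\cap U^c,\delta_4^i\cap U^c$ are empty; it does not give full containment, and your ``short argument excluding genuine overflow'' is doing real work you have not supplied. The paper instead picks a single overflow element $a$ and projects onto carefully chosen two- or three-element sets containing $a$ together with elements of the $\delta_j^{i_0}$'s, obtaining a direct contradiction via Corollary~\ref{cor:one}.

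Second, and more importantly, the paper avoids your degeneration problem by a cleaner choice of projections. Rather than a generic $e$ with ad hoc repair when some $\delta_j^i=\{e\}$, the paper splits into two sub-cases: if $|\beta|\ge 2$, pick distinct $a,b\in\beta$; if $|\beta|\le 1$ (so some $\delta_j$ has size $\ge 2$ since $n\ge 6$), pick distinct $a,b$ from that $\delta_j$. Projecting away from $a$ and invoking $\mathcal{H}(n-1)$ forces $\beta^i\in\{\beta,\beta\setminus a\}$ (resp.\ $\beta^i\in\{\beta,\beta\cup a\}$); projecting away from $b$ gives the analogous constraint with $b$; intersecting yields $\beta^i=\beta$ in one stroke, after which the $\delta$-matching follows. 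Your proposal to decompose degenerate terms via Property~\ref{Part4} and ``re-run the Case~A/B/C elimination'' is risky: those decompositions produce general basic inequalities, not elemental ones, and the $\setI_1$ elimination in Case~C relied on the \emph{original} coefficients $c_i$ and a four-variable instance of Theorem~\ref{thm:minimality}, neither of which transfers cleanly to the derived coefficients. The paper's two-elements-from-the-same-part trick sidesteps this entirely.
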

We have verified cases up to $n \le 5 $. The case when $n \ge 6 $ will
be proved by induction.  To this end, we first prove the claim that
any element appearing in the right hand side of (\ref{caseCone}) must
also appear in the left hand side.

Suppose to the contrary that (1) the equality (\ref{caseCone}) holds
and (2) there exists an element $a$ appearing only on the right hand
side. Further suppose that $a \in \delta^{i_0}_j$ for some $i^0\in
\setI_0$.  Then it is easy to find another element $b$ such that after
projection onto $\{a,b\}$, the right hand side of~(\ref{caseCone}) has
a term $I_h(a,b)$.  However, the left hand side of~(\ref{caseCone})
can be shown to be zero, contradicting to Corollary~\ref{cor:one}

On the other hand, if $a\in \beta^{i_0}$, then we can pick elements $b
\in \delta_1^{i_0}$ and $c \in \delta_2^{i_0}$. Projecting both sides
of~(\ref{caseCone}) onto $\{a,b,c\}$, the left hand side is either
zero or $I_h(b;c)$, while the right hand side of (\ref{caseCone}) is
nonzero as it contains a term $c_{i_0} I_h(b; c | a )$. Contradiction
occurs and hence all elements appearing in the right hand side will
also appear in the left hand side.

Suppose that the induction hypothesis ${\cal H}(n)$ holds. We now aim
to prove that ${\cal H}(n+1)$ also holds.  Suppose~(\ref{caseCone})
involves at most $n+1$ set elements where $n \ge 5$. We consider two
sub-cases, \emph{C.1} and \emph{C.2}.

\renewcommand{\thesubsubsectiondis}{\thesubsectiondis\arabic{subsubsection}.}

\subsubsection{$|\beta|\ge 2$}
Let $a, b \in \beta$ and $a\neq b$.  Projecting both sides of
(\ref{caseCone}) away from $a$. Then (\ref{caseCone}) becomes
\begin{multline}\label{caseA}
  J(h;\delta_1 , \delta_2 , \delta_3 , \delta_4 | \beta \backslash a)
  \\ = \sum_{i\in\setI_0} c_i J(h;\delta_1^i\backslash a ,
  \delta_2^i\backslash a , \delta_3^i\backslash a,
  \delta_4^i\backslash a | \beta^i \backslash a).
\end{multline}
As (\ref{caseA}) involves only $n$ variables, applying the induction
hypothesis, we have $\beta_i = \beta \backslash a$ or $\beta_i =
\beta$.  Similarly, we can prove that $\beta_i = \beta \backslash b$
or $\beta_i = \beta$.  Consequently, $\beta_i = \beta$ for all
$i\in\setI_0$.  Again, from (\ref{caseA}),
\begin{equation*}
J(h;\delta_1 , \delta_2 , \delta_3 , \delta_4 | \beta \backslash a) 
 = J(h;\delta_1^i \backslash a , \delta_2^i\backslash a, \delta_3^i\backslash a, \delta_4^i\backslash a | \beta^i).
\end{equation*}
By Lemma \ref{Lemmaforproof} and the induction hypothesis, we have
$\{\delta_1 , \delta_2 \} = \{\delta_1^i , \delta_2^i \}$ and
$\{\delta_3 , \delta_4 \} = \{\delta_3^i , \delta_4^i \}$.  The
hypothesis ${\cal H}(n+1)$ then holds.

\subsubsection{$|\beta |\le 1$}

Since $n+1 \ge 6$ and $|\beta| \le 1$, there exists distinct $a$ and
$b$ in a subset $\delta_i$ for some $i$.  Assume without loss of
generality that $i=1$.  Projecting (\ref{caseCone}) away from $a$,
\begin{multline*}
  J(h; \delta_1 \backslash a, \delta_2, \delta_3, \delta_4 | \beta) \\
  = \sum_{i\in \setI_0} c_i J(h;\delta_1^i\backslash a ,
  \delta_2^i\backslash a , \delta_3^i \backslash a,
  \delta_4^i\backslash a | \beta^i \backslash a) .
\end{multline*} 
By Lemma \ref{Lemmaforproof} and the induction hypothesis, $\beta^i =
\beta$ or $\beta^i = \beta \cup \{a\}$.  Similarly, by projecting
(\ref{caseCone}) away from $b$, we have $\beta^i = \beta$ or $\beta^i
= \beta \cup \{b\}$.  Consequently, $\beta^i = \beta$ for all
$i\in\setI_0$.  Thus (\ref{caseA}) becomes
\begin{equation*} 
 J(h;\delta_1 , \delta_2 , \delta_3 , \delta_4 | \beta)   =   \sum_{i\in \setI_0} c_i  J(h;\delta_1^i  , \delta_2^i , \delta_3^i, \delta_4^i | \beta).
\end{equation*}
and hence after projection away from $a$, 
\begin{multline} \label{finala}
  J(h; \delta_1 \backslash a, \delta_2, \delta_3, \delta_4 | \beta)  \\
  = \sum_{i\in \setI_0} c_i J(h;\delta_1^i\backslash a ,
  \delta_2^i\backslash a , \delta_3^i \backslash a,
  \delta_4^i\backslash a | \beta ) .
\end{multline}
 Similarly, projecting (\ref{caseCone}) away from $b$, we have
\begin{multline}\label{finalb}
  J(h; \delta_1 \backslash b, \delta_2, \delta_3, \delta_4 | \beta)  \\
  = \sum_{i\in \setI_0} c_i J(h;\delta_1^i\backslash b ,
  \delta_2^i\backslash b , \delta_3^i \backslash b,
  \delta_4^i\backslash b | \beta ) .
\end{multline} 
By~(\ref{finala}) and~(\ref{finalb}), we can then prove that 
$\{\delta_1 , \delta_2 \} = \{\delta_1^i , \delta_2^i \}$ and 
$\{\delta_3 , \delta_4 \} = \{\delta_3^i , \delta_4^i \}$. 
The hypothesis ${\cal H}(n+1)$ then holds.

Combining the two cases, we can thus conclude that the set of
inequalities in $\setIngleton$ is not redundant.  Since $\GammaIn$ is
full-dimensional, $\setIngleton$ is indeed the unique, non-redundant
set of Ingleton inequalities characterizing $\GammaIn$,
\cite{Schrijver2003}, p.64.

\section{Conclusion}
We have identified the unique minimal characterization for the set of
polymatroids satisfying all Ingleton inequalities.  Knowing this set
can greatly simplify computation of Ingleton-LP bounds for the
multicast capacity of linear network codes. Compared to na\"ive
enumeration of all Ingleton inequalities, approximately of the order
$16^n$, the actual number of necessary inequalities has size
approximately of the order $6^n/4 - 5^n$. The complexity reduction is
significant, in particular for large  $n$.

\section*{Acknowledgement}
This work was supported by the Australian Government under ARC grant
DP0557310. This work was performed while L. Guill\'e was visiting the
University of South Autralia.

\bibliographystyle{IEEEtran}


\end{document}